\title{Identifying the simple finite-dimensional Lie algebras over
  $\mathbb C$ by means of simple sequences}
\author{Kai Neerg\aa rd}
\date{}
\theoremstyle{plain}\newtheorem{lemma}{Lemma}
\theoremstyle{plain}\newtheorem{proposition}{Proposition}
\theoremstyle{plain}\newtheorem{theorem}{Theorem}
\theoremstyle{remark}\newtheorem*{remark}{Remark}
\begin{document}

\maketitle

\begin{abstract}
  A novel method of determining which Dynkin diagrams represent simple
  finite-dimensional Lie algebras over $\mathbb C$ is presented. It is
  based on a condition that is both necessary and sufficient for a
  suitably defined Cartan matrix to be expressible by scalar products
  in a Euclidean vector space. The sufficiency of this condition makes
  unnecessary subsequent verification of the existence of a Lie
  algebra or root system corresponding to each Dynkin diagram by
  explicit construction. The Dynkin diagrams are selected by
  examination of an easily calculated sequence of minors of a
  symmetrised Cartan matrix. These minors are mostly integers.
\end{abstract}

\section{\label{sec:in}Introduction}

The simple finite-dimensional Lie algebras over $\mathbb C$ were
classified in the last decades of the 19th century by
Killing~\cite{ref:Kil88} and Cartan~\cite{ref:Car94}. In their
analysis \textit{roots} play a central role. These are vectors in a
Euclidean vector space of dimension $l$, where $l$ is the rank of the
Lie algebra. For their definition I refer to the literature,
e.g.~\cite{ref:Jac62,ref:Ser66,ref:Hum72}. The Bourbaki group set up
axioms for a \textit{root system}, which is a finite set of vectors in
a finite-dimensional Euclidean vector space, and defined a subclass of
\textit{reduced root systems}. A root system is \textit{irreducible}
if it cannot be split into mutually orthogonal subsystems. A reduced
root system has the properties of the set of roots of a semisimple
finite-dimensional Lie algebra over $\mathbb C$, and it was shown by
Serre that there is a 1--1 correspondence up to isomorphisms between
reduced root systems and such Lie algebras~\cite{ref:Ser66}. An
equivalent result can be derived from theorem 7.5 in~\cite{ref:Jac62}.
The irreducible reduced root systems correspond to simple Lie
algebras. A \textit{simple system of roots} is a subset of a reduced
root system which is a basis for the ambient Euclidean vector space
such that every member of the root system is a non-zero linear
combination of the basic vectors with integer coefficients that are
either all non-negative or all non-positive. The simple system of
roots of a given reduced root system is unique up to transformations
in the symmetry group of the root system.

Upon an ordering $\alpha_i,i= 1 \ldots l,$ of a simple system of roots
it determines an $l \times l$ \textit{Cartan matrix} $A$ with entries
\begin{equation}\label{eq:A}
  A_{ij} = \frac{2 (\alpha_i,\alpha_j)}{(\alpha_i,\alpha_i)} ,
\end{equation}
where $(,\!)$ denotes the scalar product of the ambient Euclidean
vector space. The Cartan matrix determines the reduced root system up
to isomorphisms~\cite{ref:Ser66}. It has the following properties. (i)
The entries $A_{ij}$ are integers. (ii) The diagonal entries $A_{ii}$
equal 2. (iii) The off-diagonal entries, $A_{ij}, i \ne j,$ are
non-positive. (iv) For $i \ne j$ the product $A_{ij} A_{ji}$ is an
integer in the range from 0 to 3 and $A_{ji} = 0$ if $A_{ij} = 0$. I
call \emph{every} matrix with these properties a Cartan matrix. A
Cartan matrix then is not necessarily given by \eqref{eq:A} in terms
of vectors $\alpha_i$ in a Euclidean vector space. Cartan matrices are
considered \textit{isomorphic} if they arise from one another by a
transformation $A \mapsto S A S^T$, where $S$ is a permutation matrix.
If they are Cartan matrices of a reduced root system this corresponds
to reordering of any chosen simple system of roots. The standard
analysis of the structure of semisimple finite-dimensional Lie
algebras over $\mathbb C$ or reduced root
systems~\cite{ref:Jac62,ref:Ser66,ref:Hum72} involves ruling out some
Cartan matrices as incompatible with~\eqref{eq:A} with $(,)$ a
Euclidean scalar product by demanding that specific linear
combinations of the basic vectors $\alpha_i$ obey inequalities valid
in a Euclidean vector space. This results in a list of ``admissible''
isomorphism classes of Cartan matrices. The existence of a Lie algebra
or reduced root system corresponding to each admissible isomorphism
class is then verified subsequently by explicit construction.

I take in this note a different path by applying a criterion that is
both necessary and \emph{sufficient} for a Cartan matrix to be given
by~\eqref{eq:A} with the scalar products $(\alpha_i,\alpha_j)$ derived
from a positive definite quadratic form $(\alpha,\alpha)$ on a vector
space over $\mathbb R$ with basic vectors $\alpha_i,i= 1 \ldots l$.
This makes unnecessary subsequent verification of the existence of a
corresponding Lie algebra or reduced root system by explicit
construction. Indeed, given an ordered basis
$(\alpha_i,i= 1 \ldots l)$ for a vector space over $\mathbb R$ any
Cartan matrix $A$ generates a, possibly infinite, set $R$ of linear
combinations $\alpha = \sum_i k_i \alpha_i$ with integer coefficients
$k_i$ by arbitrary repeated application of the Weyl reflections
$w_i:\alpha \mapsto \alpha - \sum_j k_j A_{ij} \alpha_i$ starting from
the basis. The transformations $w_i$ map $R$ into itself. When $A$ is
given by \eqref{eq:A} in terms of a positive definite quadratic form
$(\alpha,\alpha)$ the transformation $w_i$ becomes the reflection in
the hyperplane through 0 perpendicular to $\alpha_i$ with respect to
this quadratic form. Since these transformations are orthogonal every
vector $\alpha \in R$ obeys
$(\alpha,\alpha) \le \max \{(\alpha_i,\alpha_i) |i= 1 \ldots l\}$. The
set $R$ is therefore finite. Since the transformations $w_i$ are
involutions they generate a group $W$ of orthogonal transformations. A
Weyl reflection $w_\alpha$ can be defined for every $\alpha \in R$ as
the reflection in the hyperplane through 0 perpendicular to $\alpha$,
and because every $w_\alpha$ is given by $w_\alpha = s w_i s^{-1}$ for
some $i,1 \le i \le l,$ and some $s \in W$ the set $R$ is invariant
under every $w_\alpha$. Thus $R$ is a root system. If
$\alpha,\beta \in R$ obey $\beta = t \alpha$ with $|t| < 1$ and
$i,1 \le i \le l,$ and $s \in W$ are such that $\alpha = s \alpha_i$
then $s^{-1} \beta = t\alpha_i$, contradicting that every
$\gamma \in R$ is an integer linear combination of the basic vectors
$\alpha_i$. Thus $R$ is reduced. It is another, in my view, attractive
feature of the method to be presented that it does not rely on ad hoc
constructions in individual cases but applies one universal criterion
to every potential Cartan matrix of a simple finite-dimensional Lie
algebra over $\mathbb C$.

My method is based on the observation that a Cartan matrix being given
by \eqref{eq:A} in terms of an ordered basis
$(\alpha_i,i= 1 \ldots l)$ for a vector space $V$ over $\mathbb R$ and
a positive definite quadratic form $(\alpha,\alpha)$ on $V$ is
equivalent to positive definiteness of a \textit{symmetrised Cartan
  matrix} to be defined in section~\ref{sec:sym}. The method makes
use, moreover, of the fact that in many relevant cases the Cartan
matrix and its symmetrised version can be chosen tridiagonal. In the
remaining cases a suitably chosen symmetrised Cartan matrix is
rendered so by a fairly simple orthogonal transformation. Applying
Sylvester's criterion for positive definiteness of a real symmetric
matrix then amounts to examining a simple sequence of mostly integers
which is calculated quickly from the matrix in question.

\section{\label{sec:sym}Symmetrised Cartan matrices. Dynkin and
  Coxeter diagrams}

This section is devoted to some preliminary observations and
definitions before the main discussion to follow in
section~\ref{sec:an}. Consider a Cartan matrix $A$ and let it be
assumed that every pair of different indices $i_0$ and $i_n$ is
connected by a sequence $i_0,i_1,\dots,i_n$ such that
$A_{i_0i_1} A_{i_1i_2} \cdots A_{i_{n-1}i_n} \ne 0$. Then also
$A_{i_1i_0} A_{i_2i_1} \cdots A_{i_ni_{n-1}} \ne 0$. For $A$ to be
given by~\eqref{eq:A} in terms of an ordered basis
$(\alpha_i,i= 1 \ldots l)$ for a vector space $V$ over $\mathbb R$ and
a quadratic form $(\alpha,\alpha)$ on $V$ none of the squared norms
$(\alpha_i,\alpha_i)$ must equal zero. Since $A$ does not depend on
the overall normalisation of $(\alpha,\alpha)$ one can therefore
choose any one of them, say $(\alpha_{i_0},\alpha_{i_0})$, to be
positive. For every $i_n \ne i_0$ one then has
$(\alpha_{i_n},\alpha_{i_n}) = (A_{i_0i_1} \cdots A_{i_{n-1}i_n}) /
(A_{i_1i_0} \cdots A_{i_n i_{n-1}}) (\alpha_{i_0},\alpha_{i_0}) > 0$
in terms of a sequence $i_0,i_1,\dots,i_n$ as above. Every
$(\alpha_i,\alpha_i)$ thus being determined up to the overall factor
$(\alpha_{i_0},\alpha_{i_0})$, the scalar products
$(\alpha_i,\alpha_j)$ with $i \ne j$ are given by~\eqref{eq:A}. In
this way $A$ determines every scalar product $(\alpha_i,\alpha_j)$ up
to an overall non-zero factor, and this factor may be chosen such that
every $(\alpha_i,\alpha_i)$ is positive. If the set of indices is the
disjoint union of subsets internally connected by sequences
$i_0,i_1,\dots,i_n$ as above but not mutually connected by such
sequences then an arbitrary non-zero factor applies to the scalar
products $(\alpha_i,\alpha_j)$ within each subset and
$(\alpha_i,\alpha_j) = 0$ when $i$ and $j$ belong to different
subsets. Because the vectors $\alpha_i$ span $V$ the matrix $A$ thus
determines the entire quadratic form $(\alpha,\alpha)$ up to an
arbitrary overall non-zero factor applied to the scalar products
$(\alpha,\beta)$ within each subspace spanned by the vectors
$\alpha_i$ with indices $i$ in one of the subsets, and these factors
may be chosen such that every $(\alpha_i,\alpha_i)$ is positive. When
$\alpha$ and $\beta$ belong to different subspaces
$(\alpha,\beta) = 0$. When $A$ is a Cartan matrix of a reduced root
system each subspace corresponds to a an irreducible component of the
root system, so when the root system is irreducible all the indices
$i$ are connected by sequences $i_0,i_1,\dots,i_n$ as above and
$(\alpha,\alpha)$ is determined by the Cartan matrix up to one overall
non-zero factor.

When $A$ is given by~\eqref{eq:A} and all the scalar products
$(\alpha_i,\alpha_i)$ are positive the real $l \times l$ matrix $B$
with entries
\begin{equation}\label{eq:B1}
  B_{ij} = (\alpha_i,\alpha_i)^{\frac12} A_{ij}
      (\alpha_j,\alpha_j)^{-\frac12}
    = \frac{2 (\alpha_i,\alpha_j)}
      {\sqrt{(\alpha_i,\alpha_i) (\alpha_j,\alpha_j)}} .
\end{equation}
is symmetric. Conversely if positive real numbers $c_i$ exist such
that the $l \times l$ matrix $B$ with entries
\begin{equation}
  B_{ij} = c_i A_{ij} c_j^{-1}
\end{equation}
is symmetric then $(A_{i_0i_1} \cdots A_{i_{n-1}i_n}) /(A_{i_1i_0}
\cdots A_{i_ni_{n-1}}) = c_{i_n}^2/c_{i_0}^2$ for the \linebreak
sequences $i_0,i_1,\dots,i_n$ considered above, so the determination
of \linebreak $(\alpha_{i_n},\alpha_{i_n})
/(\alpha_{i_0},\alpha_{i_0})$ is unambiguous. Then $A$ is given by
\eqref{eq:A} in terms of an ordered basis $(\alpha_i,i= 1 \ldots l)$
for a vector space $V$ over $\mathbb R$ with the scalar products
$(\alpha_i,\alpha_j)$ derived from a quadratic form $(\alpha,\alpha)$
on $V$ with every $(\alpha_i,\alpha_i)$ positive. The existence of
such $c_i$ is thus a necessary and sufficient condition for a Cartan
matrix $A$ to be expressible in this way. It is then also a necessary
condition for the Cartan matrix being derivable from a reduced root
system. When such positive real numbers $c_i$ exist I call $A$
\textit{symmetrisable} and I call $B$ the corresponding
\textit{symmetrised Cartan matrix}. Symmetrised Cartan matrices are
implicit in the discussions in~\cite{ref:Jac62,ref:Ser66,ref:Hum72}.
Isomorphism of symmetrised Cartan matrices is defined as for Cartan
matrices. Clearly either all or none of the members of an isomorphism
class are positive definite. It follows from the last expression
in~\eqref{eq:B1} that if $\alpha = \sum_{i=1}^l x_i \alpha_i$ with
real coefficients $x_i$ then $(\alpha,\alpha) = \frac12 v B v^T$,
where $v$ is the row vector with entries $x_i
\sqrt{(\alpha_i,\alpha_i)}$. Therefore $(\alpha,\alpha)$ is positive
definite if and only if $B$ is so.

The \textit{Dynkin diagram}~\cite{ref:Dyn47} of a Cartan matrix $A$ is
a graph with a vertex for each row in $A$ and $m_{ij} = A_{ij} A_{ji}$
lines between the vertices corresponding to the $i$th and $j$th rows,
where $i \ne j$. The collection of $m_{ij}$ lines between the two
vertices is considered a single line with \textit{multiplicity}
$m_{ij}$. The definition of a Cartan matrix implies that the
multiplicity is an integer in the range from 1 to 3. Lines with
$m_{ij} > 1$ are called \textit{multiple} lines. A direction is
assigned to each multiple line to show which of $A_{ij}$ and $A_{ji}$
is the larger. Clearly Cartan matrices have identical Dynkin diagrams
if and only if they are isomorphic, so each Dynkin diagram describes
an isomorphism class of Cartan matrices. An undirected Dynkin diagram
is called a \textit{Coxeter diagram}~\cite{ref:Ser66} and describes an
isomorphism class of symmetrised Cartan matrices. The lines of the
Coxeter diagram of a symmetrised Cartan matrix $B$ have multiplicities
$B_{ij}^2$. I call a Coxeter diagram \textit{positive definite} if it
is generated in this way by a positive definite symmetrised Cartan
matrix. The sets of vertices of the connected components of a Dynkin
or Coxeter diagram correspond to the maximal subsets of row indices
connected by sequences $i_0,i_1,\dots,i_n$ as above for any member of
the corresponding isomorphism class of matrices. To display Coxeter
diagrams inline I use a notation where $*$ denotes a vertex and $-$,
$=$ and $\equiv$ lines of multiplicities 1--3. When more than two
lines issue from a vertex, I call this vertex a \textit{node},
following Jacobson's terminology~\cite{ref:Jac62}. Only diagrams with
three single lines issuing from the node will need to be displayed.
Such nodes are denoted by $>\!*\,-$ followed by a subgraph and
preceded by a pair of subgraphs in parentheses. Thus, for example, the
$F_4$ diagram is shown as $*-*=*-*$ and the $E_6$ diagram as
$(*-*,*)>*-*-*$.

\section{\label{sec:an}Analysis}

I now set out to determine which connected Coxeter diagrams are
positive definite. Let it be noticed first that if a symmetrised
Cartan matrix is positive definite then every principal submatrix is
positive definite. Therefore if some subdiagram of a Coxeter diagram
is not positive definite then the entire diagram is not. For my first
two propositions I follow Jacobson~\cite{ref:Jac62}.

\begin{proposition}\label{th:dir}
  For a Coxeter diagram to be positive definite it must have less than
  $l$ lines not counting multiplicity, where $l$ is the order of the
  diagram. In particular it must have no cycles.
\end{proposition}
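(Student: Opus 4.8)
The plan is to detect the number of lines by evaluating the quadratic form attached to $B$ on one well-chosen vector. Recall that a symmetrised Cartan matrix $B$ of order $l$ has $B_{ii}=2$ and, for $i\ne j$, $B_{ij}=B_{ji}=-\sqrt{m_{ij}}$, where $m_{ij}=B_{ij}^2\in\{0,1,2,3\}$ is the multiplicity of the line joining vertices $i$ and $j$; the pair $\{i,j\}$ is a line not counting multiplicity precisely when $m_{ij}\ge 1$. I would apply the identity $(\alpha,\alpha)=\frac12 v B v^T$ from section~\ref{sec:sym} to the row vector $u=(1,\dots,1)$, which corresponds to $\alpha=\sum_i(\alpha_i,\alpha_i)^{-\frac12}\alpha_i$, the sum of the normalised basic vectors. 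Since every off-diagonal entry of $B$ is non-positive, this is the natural choice for making the form small, and a short computation gives
\[ u B u^T = \sum_i B_{ii} + \sum_{i\ne j} B_{ij} = 2l - 2\sum_{i<j}\sqrt{m_{ij}} . \]

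If the diagram is positive definite then so is $B$, whence $u B u^T>0$ because $u\ne 0$. This forces $\sum_{i<j}\sqrt{m_{ij}}<l$. As every present line contributes a term $\sqrt{m_{ij}}\ge 1$ while absent lines contribute $0$, the number of lines not counting multiplicity is at most $\sum_{i<j}\sqrt{m_{ij}}$ and hence strictly less than $l$, which is the first assertion.

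For the absence of cycles I would use the observation recorded just before the proposition, that a principal submatrix of a positive definite matrix is again positive definite, so every subdiagram of a positive definite diagram is positive definite. If the diagram had a cycle through $k$ distinct vertices, the subdiagram supported on those vertices would have order $k$ yet would carry at least the $k$ edges of the cycle, contradicting the first assertion applied to this subdiagram, which permits fewer than $k$ lines. Hence no cycle can occur.

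I do not anticipate a real obstacle; the one point needing care is logical rather than computational. The global bound of the first assertion does not by itself exclude cycles, since a disconnected diagram on $l$ vertices may contain a cycle while still having fewer than $l$ lines in total. The cycle statement must therefore be obtained by restricting to the cycle as a subdiagram and invoking heredity of positive definiteness, not read off directly from the overall count.
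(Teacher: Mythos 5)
Your proof is correct and follows essentially the same route as the paper's: the same all-ones vector gives $vBv^T = 2\bigl(l - \sum_{i<j}\sqrt{m_{ij}}\bigr)$, forcing fewer than $l$ lines, and the cycle claim is handled in both by viewing the cycle as a subdiagram and invoking heredity of positive definiteness under passage to principal submatrices. Your closing remark on why the cycle statement needs the subdiagram restriction is a sound observation, implicit in the paper's terser phrasing.
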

\begin{proof}
  Let $B$ be a symmetrised Cartan matrix which generates the diagram
  and $v$ the $l$-dimensional row vector with every entry equal to 1.
  Then $v B v^T = \linebreak 2(l - \sum_{i<j} \sqrt{m_{ij}})$ in terms
  of the multiplicities $m_{ij}$ of the lines of the Coxeter diagram.
  Since $m_{ij} \ge 1$ when $m_{ij} > 0$ the sum in this expression
  must have less than $l$ non-zero terms for $B$ to be positive
  definite. For the second part notice that a cycle is a subdiagram
  with as many lines as vertices.
\end{proof}

\noindent It follows from this proposition that arbitrary orientation
of each multiple line of a connected positive definite Coxeter diagram
gives rise to a Dynkin diagram generated by a Cartan matrix which upon
symmetrisation generates the Coxeter diagram. Let indeed the Coxeter
diagram be generated by a symmetrised Cartan matrix $B$ and let its
vertices be indexed by the corresponding row indices of $B$. Let a
line in the diagram be chosen, let $p$ and $q$ be the indices of its
endpoints and let $S_p$ and $S_q$ be the sets of indices of the
vertices of the maximal connected subdiagrams containing the vertices
indexed by $p$ and $q$, respectively, and not containing the chosen
line. Because the Coxeter diagram has no cycles the total set of
indices is the disjoint union of $S_p$ and $S_q$. Consider the matrix
$A$ with entries
\begin{equation}\label{eq:B->A}
  A_{ij} = c_i^{-1} B_{ij} c_j ,
\end{equation}
where $c_i = \sqrt{m_{pq}}$ for $i \in S_p$ and $c_i = 1$ for
$i \in S_q$. Because no line other than the chosen one connects the
subdiagrams with vertices indexed by $S_p$ and $S_q$ this matrix has
entries $A_{pq} = -1$, $A_{qp} = - m_{pq}$ and $A_{ij} = B_{ij}$ when
$\{i,j\} \ne \{p,q\} $. Repeating this for every line in the diagram
results in a Cartan matrix $A$ where for every line in the diagram
$A_{pq} = -1$ and $A_{qp} = - m_{pq}$ in terms of the pair $(p,q)$ of
indices of the endpoints of this line chosen in the construction.
Since the order of $p$ and $q$ in every such pair is arbitrary Cartan
matrices with any orientation of the multiple lines of their Dynkin
diagrams can be obtained in this way. By~\eqref{eq:B->A} these Cartan
matrices are symmetrisable and their symmetrised Cartan matrices equal
$B$. The multiplicities of corresponding lines in the Coxeter and
Dynkin diagrams are identical. If the Coxeter diagram has no multiple
line there is only one corresponding Dynkin diagram, which is
identical to the Coxeter diagram, and one Cartan matrix for each
generating symmetrised Cartan matrix, which is identical to the
latter.

The following observation is used frequently in the following.

\begin{proposition}
  For a Coxeter diagram to be positive definite no vertex must have a
  degree larger than 3 \textnormal{counting} multiplicity.
\end{proposition}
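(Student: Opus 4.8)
The plan is to show that a vertex whose incident lines have total multiplicity at least $4$ forces a principal submatrix of any generating symmetrised Cartan matrix $B$ to fail positive definiteness; by the observation preceding Proposition~\ref{th:dir} that every principal submatrix of a positive definite matrix is again positive definite, this suffices. Fix a vertex, indexed $0$, with neighbours $j = 1,\dots,k$, incident multiplicities $m_{0j}$, and total degree $M = \sum_{j} m_{0j} \ge 4$. First I would dispose of the case in which two of these neighbours are themselves joined by a line: then vertex $0$ together with those two neighbours carries a cycle, and Proposition~\ref{th:dir} already rules out positive definiteness. Hence I may assume the neighbours are mutually non-adjacent, so that the subdiagram induced on $\{0,1,\dots,k\}$ is a \emph{star}.

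Next I would record the corresponding principal submatrix $B'$. Its diagonal entries all equal $2$; each centre–leaf entry is $B'_{0j} = B'_{j0} = -\sqrt{m_{0j}}$, negative because $B_{ij} = c_i A_{ij} c_j^{-1}$ with $c_i > 0$ and $A_{ij} \le 0$ while $(B'_{0j})^2 = A_{0j} A_{j0} = m_{0j}$; and the leaf–leaf entries vanish since the leaves are mutually non-adjacent.

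The crux is exhibiting a nonzero vector on which the form is nonpositive. I would take the row vector $v$ with centre entry $v_0 = t$, to be optimised, and leaf entries $v_j = \sqrt{m_{0j}}$. A direct computation collapses the quadratic form to a single variable,
\begin{equation*}
  v B' v^T = 2t^2 - 2Mt + 2M ,
\end{equation*}
depending on the neighbours only through $M$. Completing the square and taking $t = M/2$ gives $v B' v^T = M(4 - M)/2 \le 0$ whenever $M \ge 4$, with $v \ne 0$. Thus $B'$ is not positive definite, and therefore neither is $B$ nor the Coxeter diagram.

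I expect the main obstacle to be precisely the guess of the weighting $v_j = \sqrt{m_{0j}}$: it is this choice that makes every centre–leaf contribution equal to $m_{0j}$ and reduces the form to the one-variable expression above, after which the threshold $M = 4$ drops out of the discriminant. The remaining points—reducing to a star via Proposition~\ref{th:dir}, reading off the entries $-\sqrt{m_{0j}}$, and lifting the conclusion from $B'$ to $B$—are routine. One minor care is that the bound is sharp: at $M = 4$ the form vanishes rather than turning negative, so the conclusion must be stated as the failure of positive definiteness via a nonzero null vector, not as indefiniteness.
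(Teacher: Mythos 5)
Your proof is correct, and it reduces to the same principal submatrix as the paper does --- the star consisting of the vertex and its neighbours, with leaf--leaf entries zero after the cycle case is dispatched via Proposition~\ref{th:dir} --- but it certifies the failure of positive definiteness differently. The paper evaluates the determinant of that star submatrix, obtaining $2^{l-2}\bigl(4-\sum_{i}m_{1i}\bigr)$, and invokes positivity of the determinant of a positive definite matrix; you instead exhibit the explicit test vector with centre entry $M/2$ and leaf entries $\sqrt{m_{0j}}$, on which the form evaluates to $M(4-M)/2\le 0$. Your computation checks out, and your choice of weights is exactly the one that makes the form depend only on $M$. The two routes buy slightly different things: the determinant gives the threshold with no guessing (it factors mechanically), whereas your test vector is in the spirit of the paper's own proof of Proposition~\ref{th:dir} (the all-ones vector there) and avoids any determinant evaluation; it also hands you the null vector at $M=4$ explicitly, which is why you correctly phrase the conclusion as failure of positive definiteness rather than indefiniteness. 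Your closing caveat on sharpness is apt and applies equally to the paper's version, where the determinant merely vanishes at $M=4$.
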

\begin{proof}
  A positive definite real symmetric matrix has positive determinant.
  Let $u$ be a vertex in the Coxeter diagram and consider the
  subdiagram whose vertices are $u$ and its adjacent vertices. Let $l$
  be the order of this subdiagram and let the subdiagram be generated
  by a symmetrised Cartan matrix $B$ whose top row corresponds to the
  vertex $u$. Then, since the Coxeter diagram has no cycles,
  \begin{equation}
    B = \begin{pmatrix}
      2 & -\sqrt{m_{12}} & -\sqrt{m_{13}} & \cdots & -\sqrt{m_{1l}}\\
      -\sqrt{m_{12}} & 2 & 0 & \cdots & 0 \\
      -\sqrt{m_{13}} & 0 & 2 & \cdots & 0 \\
      \vdots & \vdots & \vdots & \ddots & \vdots \\
      -\sqrt{m_{1l}} & 0 & 0 & \cdots & 2      
    \end{pmatrix}
  \end{equation} 
  in terms of the multiplicities $m_{ij}$ of the lines of the
  subdiagram with its vertices indexed by the row indices of $B$. One
  calculates $|B| = 2^{l-2}(4 - \sum_{i=2}^l m_{1i})$. For this to be
  positive one must have $m = \sum_{i=2}^l m_{1i} < 4$, where $m$ is
  the degree of $u$ counting multiplicity.
\end{proof}

\noindent It is an immediate consequence of this proposition that no
other Coxeter diagram with a triple line than the $G_2$ diagram
$*\equiv*$ is positive definite.

My main tool is the following.

\begin{lemma}
  Given a tridiagonal, symmetric, real, $l \times l$ matrix $T$ with
  every diagonal entry equal to 2 let the sequence $p_i,i=0 \dots l,$
  be defined by
  \begin{equation}\begin{gathered}
    p_0 = 1, \quad p_1= 2 , \quad
    p_i = 2 p_{i-1} - T_{i,i-1}^2 p_{i-2}, \quad i \ge 2 .
  \end{gathered}\end{equation}
  Then $T$ is positive definite if and only if every $p_i$ is
  positive.
\end{lemma}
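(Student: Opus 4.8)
The plan is to identify $p_i$ with the leading principal minor of $T$ of order $i$ --- that is, the determinant of the submatrix $T^{(i)}$ formed by the first $i$ rows and columns --- and then to invoke Sylvester's criterion, which the introduction has already named as the intended tool.

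First I would set $D_i = \det T^{(i)}$, with the conventions $D_0 = 1$ (the empty determinant) and $D_1 = T_{11} = 2$, so that $D_0 = p_0$ and $D_1 = p_1$. I would then derive the three-term recurrence for $D_i$ by a Laplace expansion along the last column of $T^{(i)}$. Because $T$ is tridiagonal, this column has only the two nonzero entries $T_{ii} = 2$ and $T_{i-1,i} = T_{i,i-1}$. The diagonal entry contributes $2 D_{i-1}$, since deleting row $i$ and column $i$ leaves $T^{(i-1)}$. The entry $T_{i-1,i}$ contributes a cofactor whose minor, obtained by deleting row $i-1$ and column $i$, is again tridiagonal with a single surviving entry $T_{i,i-1}$ in its last row; a further expansion of that minor along its last row isolates the factor $T_{i,i-1} D_{i-2}$. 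Tracking the two cofactor signs collapses the expansion to exactly $D_i = 2 D_{i-1} - T_{i,i-1}^2 D_{i-2}$ for $i \ge 2$. Since $D_0, D_1$ agree with $p_0, p_1$ and the two sequences obey the same recurrence, $p_i = D_i$ for all $i$.

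With this identification the lemma follows at once. Sylvester's criterion states that a real symmetric matrix is positive definite if and only if all its leading principal minors $D_1, \dots, D_l$ are positive. Now $p_0 = 1 > 0$ and $p_1 = 2 > 0$ hold unconditionally, so the requirement that every $p_i$ be positive for $i = 0, \dots, l$ is equivalent to $D_1, \dots, D_l > 0$, and hence to positive definiteness of $T$. This gives both directions of the asserted equivalence simultaneously.

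The only step requiring care is the cofactor computation establishing the recurrence; the difficulty there is purely bookkeeping --- confirming that tridiagonality leaves exactly two surviving terms, that the two deletions needed to reach $D_{i-2}$ combine to produce the off-diagonal entry \emph{squared}, and that the signs from the two Laplace expansions multiply out correctly. Everything beyond that is a direct appeal to Sylvester's criterion, with the observation that the trivially positive values $p_0$ and $p_1$ add no constraint.
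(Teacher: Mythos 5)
Your proposal is correct and follows exactly the paper's own argument: identify $p_i$ with the $i$th leading principal minor via cofactor expansion (which the paper leaves as ``an easy induction'') and then invoke Sylvester's criterion. You have merely written out the bookkeeping of the Laplace expansion that the paper omits.
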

\begin{proof}
  An easy induction by means of the cofactor expansion of determinants
  shows that $p_i$ is for $i > 0$ the leading principal minor of $T$
  of order $i$. The lemma thus expresses Sylvester's criterion,
  e.g.~\cite{ref:Hor85}, for positive definiteness of a real symmetric
  matrix.
\end{proof}
\begin{remark}
  If $T$ is a symmetrised Cartan matrix then
  $T_{i,i-1}^2 = m_{i,i-1}$, so the sequence of these coefficients can
  then be read directly off its Coxeter diagram when the vertices of
  the latter are ordered by the row indices of $T$.
\end{remark}

\noindent A linear Coxeter diagram is generated by some tridiagonal
symmetrised Cartan matrix. The lemma therefore allows me to deal at
one stroke with a large class of such diagrams.

\begin{proposition}\label{th:lin}
  All $A_l$ Coxeter diagrams $*-*-*\cdots$ with $l \ge 1$ are positive
  definite. So are the $B_l/C_l$ Coxeter diagrams $*=*-*-*\cdots$ with
  $l \ge 2$. The $F_4$ diagram $*-*=*-*$ is the only Coxeter diagram
  with chains of single lines issuing from both endpoints of a double
  line that is positive definite. The $G_2$ Coxeter diagram $*\equiv*$
  is positive definite
\end{proposition}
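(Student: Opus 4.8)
The plan is to exploit that all four families consist of linear Coxeter diagrams, so each is generated by a tridiagonal symmetrised Cartan matrix and the lemma applies directly. For every diagram I would order the vertices along the chain, read the coefficients $T_{i,i-1}^2 = m_{i,i-1}$ off the diagram as in the remark, and evaluate the minor sequence $p_i$. Positive definiteness then reduces to checking that no $p_i$ vanishes or turns negative.

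For the three families other than $F_4$ the computation is immediate. In the $A_l$ case every $m_{i,i-1}=1$, so the recurrence reads $p_i = 2p_{i-1}-p_{i-2}$ with $p_0=1$, $p_1=2$; its solution is $p_i=i+1$, as is at once verified from the recurrence, and this is positive for all $i$. For $B_l/C_l$ I would place the double line first, so that $m_{21}=2$ and all later coefficients equal $1$; then $p_2 = 2\cdot 2 - 2\cdot 1 = 2$, and the subsequent single lines give $p_i = 2p_{i-1}-p_{i-2}$ starting from $p_1=p_2=2$, whence $p_i=2$ for every $i\ge 1$. For $G_2$ the single relevant step is $p_2 = 2\cdot 2 - 3\cdot 1 = 1 > 0$. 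Each of these sequences is everywhere positive, so all three families are positive definite.

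The substance of the proposition lies in the $F_4$ statement, and this is where I would concentrate the work. I would parametrise a diagram with chains of single lines on both sides of a single double line by the number $p\ge 2$ of vertices up to and including the left endpoint of the double line and the number $q\ge 2$ of vertices from the right endpoint onward, the conditions $p,q\ge 2$ encoding that a single-line chain genuinely issues from each endpoint. Up to the double line all coefficients are $1$, so $p_i=i+1$ and in particular $p_{p-1}=p$ and $p_p=p+1$; the double line then gives $p_{p+1}=2(p+1)-2p=2$. The decisive observation is that beyond the double line all coefficients are again $1$, so the recurrence $p_i=2p_{i-1}-p_{i-2}$ makes the minors form an arithmetic progression with common difference $p_{p+1}-p_p = 1-p$. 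Thus $p_{p+1+k}=2+k(1-p)$ for $k=0,\dots,q-1$, covering the $q$ vertices from the right endpoint of the double line onward.

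It remains to read off when this progression stays positive. Since positive definiteness is unchanged by reordering the vertices, I may assume $p\le q$. If $p\ge 3$ then already $p_{p+2}=3-p\le 0$, so the diagram fails. Hence $p=2$, for which $p_{p+1+k}=2-k$ is positive only while $k\le 1$; since the largest index present is $k=q-1$, positivity forces $q-1\le 1$, that is $q=2$. The sole surviving diagram is $p=q=2$, namely $F_4$, whose sequence $1,2,3,2,1$ is indeed everywhere positive. I expect the only delicate point to be the bookkeeping just described: recognising that the length of the left chain enters solely through the common difference $1-p$ of the post-double-line progression, which is exactly what pins the positive definite configuration down to $F_4$.
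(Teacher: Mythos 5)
Your proof is correct and follows essentially the same route as the paper: order the vertices along the chain, apply the tridiagonal-minor lemma, and check positivity of the resulting sequences. The only (harmless) difference is in the $F_4$ case, where you compute the minor sequence for general chain lengths $p,q$ directly, whereas the paper computes only the one-parameter family $*-*=*-*-*\cdots$ and eliminates the remaining diagrams via the principal-submatrix (subdiagram) argument.
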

\begin{proof}
  The sequences of minors $p_i, i \ge 2,$ of the lemma are calculated
  easily from the Coxeter diagrams when their vertices are ordered
  from the left, cf. the remark. One gets
  \begin{equation}\label{eq:lin}\begin{gathered}
    A_l: 3,4,5,\ldots, \quad B_l/C_l: 2,2,2,\ldots, \\
    *-*=*-*-*\cdots: 3,2,1,0,-1,\ldots \quad G_2: 1.
  \end{gathered}\end{equation}
It is straightforward to verify by formal induction the systematics
visualised in~\eqref{eq:lin}. In the cases of $A_l$ and $B_l/C_l$ the
order $l$ of a positive definite Coxeter diagram is seen to have no
upper limit. In the case of diagrams $*-*=*-*-*\cdots$ the minor $p_i$
is seen to cease to be positive for $i = 5$. It follows that $l = 4$
is the maximal order of this type of Coxeter diagram if it is to be
positive definite. This also rules out diagrams with chains of more
that one single line issuing from both endpoint of a double line
because such diagrams would have a subdiagram $*-*=*-*-*$. The $G_2$
Coxeter diagram is seen to be positive definite.
\end{proof}

\begin{proposition}\label{th:max1}
  For a connected Coxeter diagram to be positive definite it
  must have at most one double line.
\end{proposition}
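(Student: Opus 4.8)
The plan is to argue by contradiction. Suppose a connected Coxeter diagram is positive definite yet has two double lines; I aim to exhibit a subdiagram that is not positive definite, which is impossible because every principal submatrix of a positive definite matrix is positive definite. Three structural facts established above will be in force: the diagram is a tree (Proposition~\ref{th:dir}), every vertex has degree at most $3$ counting multiplicity (the second proposition), and, since the diagram is not the $G_2$ diagram, it carries no triple line (the consequence of that proposition), so each of its lines is single or double.

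First I would note that two double lines cannot meet at a common vertex, for such a vertex would have degree at least $4$. Consequently the two nearest endpoints of any two double lines are distinct, and since the diagram is connected they are joined by a path. Among all pairs of double lines I would choose one whose connecting path has minimal length; by minimality this path meets no further double line, and as triple lines are absent it consists entirely of single lines. Because the diagram is a tree, the subdiagram induced on the vertices of this path has no additional edges, so it is exactly the linear diagram $*=*-*-\cdots-*=*$, carrying a double line at each end and single lines in between, with $l \ge 4$ vertices.

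It then suffices to show that $*=*-*-\cdots-*=*$ is never positive definite. Ordering its vertices along the line and invoking the Lemma with $m_{21}=2$, with $m_{i,i-1}=1$ for the interior single lines, and with $m_{l,l-1}=2$, a short induction exactly as in the $B_l/C_l$ case of Proposition~\ref{th:lin} yields $p_2 = p_3 = \cdots = p_{l-1} = 2$, after which the terminal double line forces $p_l = 2\cdot 2 - 2\cdot 2 = 0$. As $p_l$ is not positive, the Lemma shows this chain is not positive definite, whence neither is the diagram containing it, the desired contradiction.

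I expect the main obstacle to lie not in the minor computation, which is routine and parallels the $B_l/C_l$ case already treated, but in the reduction: one must choose the two double lines so that no double line lies on the path between them, and then use the tree property together with the absence of triple lines to confirm that the induced subdiagram really is the clean chain $*=*-*-\cdots-*=*$, with no stray edges and with all interior lines single.
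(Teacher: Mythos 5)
Your proof is correct and follows essentially the same route as the paper: extract the linear subdiagram $*=*-*\cdots*-*=*$ containing the two double lines and show via the Lemma that its minors run $2,2,\dots,2,0$, so it is not positive definite. The only difference is that you spell out the graph-theoretic reduction (tree property, degree bound, minimal connecting path) that the paper leaves implicit.
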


\begin{proof}
  If the diagram had two double lines it would have a subdiagram
  \linebreak $*=*-*\cdots*-*=*$ with one or more single lines between
  the double lines. Calculation for this subdiagram of the minors
  $p_i$ of the lemma with the vertices ordered from the left gives
  $p_i = 2,2,\dots,2,0$ for $2 \le i \le l$, where $l$ is the order of
  the diagram. Its Coxeter diagram therefore is not positive definite.
\end{proof}

I next turn to Coxeter diagrams with a node and first consider
diagrams $(*,*)>*-*\cdots$ where $*\cdots$ denotes some subdiagram
generated by a symmetrised Cartan matrix $Z$ whose top row corresponds
to the leftmost vertex. A symmetrised Cartan matrix which generates
the total diagram is
\begin{equation}
  B =\begin{pmatrix}
    X & Y \\
    Y^T  & Z
  \end{pmatrix} ,
\end{equation}
where
\begin{equation}
  X = \begin{pmatrix}
    2 & 0 & -1 \\
    0 & 2 & -1 \\
    -1 & -1 & 2
  \end{pmatrix}
\end{equation}
and the entries of $Y$ are 0 except for $-1$ in its lower-left corner.
The transformation $B \mapsto B' = T B T^T$ with
\begin{equation}
  T = \begin{pmatrix}
    S & 0 \\
    0  & I
  \end{pmatrix} ,
\end{equation}
where
\begin{equation}
  S = \begin{pmatrix}
    \sqrt{\tfrac12} & -\sqrt{\tfrac12} & 0 \\
    \sqrt{\tfrac12} & \sqrt{\tfrac12} & 0 \\
    0 & 0 & 1 \\
  \end{pmatrix}
\end{equation}
and $I$ is the identity matrix of the same dimension as $Z$, gives
\begin{equation}
  B' =\begin{pmatrix}
    X' & Y \\
    Y^T  & Z
  \end{pmatrix}
\end{equation}
with
\begin{equation}
  X' =\begin{pmatrix}
    2 & 0 & 0 \\
    0 & 2 & -\sqrt{2} \\
    0 & -\sqrt{2}  & 2
  \end{pmatrix} .
\end{equation}
The submatrix $Y$ does not change in this transformation because $S$
has zeros in its last row except for 1 on the diagonal. The matrix
$B'$ is recognised as a symmetrised Cartan matrix generating the
disconnected Coxeter diagram \linebreak $*\ *=*-*\cdots$. Hence the
diagram $(*,*)>*-*\cdots$ is positive definite if and only if the
diagram $*=*-*\cdots$ is positive definite. Two consequences can be
drawn immediately.

\begin{proposition}
  The $D_l$ Coxeter diagrams $(*,*)>*-*-*\cdots$ of order $l \ge 4$
  are positive definite.
\end{proposition}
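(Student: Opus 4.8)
The plan is to obtain the statement as an immediate corollary of the equivalence just established, namely that the diagram $(*,*)>*-*\cdots$ is positive definite if and only if $*=*-*\cdots$ is. The only work is to specialise that equivalence to the present case and to keep careful track of the order of the diagram.

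First I would observe that the $D_l$ diagram $(*,*)>*-*-*\cdots$ is precisely the diagram $(*,*)>*-*\cdots$ in which the trailing subdiagram generated by $Z$ is a chain of single lines. Applying the equivalence, this diagram is positive definite if and only if the disconnected diagram $*\ *=*-*\cdots$ is. The isolated vertex contributes a $1 \times 1$ block with diagonal entry $2$, which is positive definite by itself, so the disconnected diagram is positive definite exactly when its connected part $*=*-*-*\cdots$ is.

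Next I would count vertices. The $D_l$ diagram consists of the two fork vertices, the node, and $l-3$ further vertices in the chain, hence $l$ vertices in all. The transformation replaces the two fork vertices and the node by an isolated vertex together with the two endpoints of a double line, while leaving the chain untouched. The connected part $*=*-*-*\cdots$ therefore has $l-1$ vertices and is exactly the $B_{l-1}/C_{l-1}$ diagram.

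Finally I would invoke Proposition~\ref{th:lin}: for $l \ge 4$ one has $l-1 \ge 3 \ge 2$, so the $B_{l-1}/C_{l-1}$ diagram is positive definite, whence so is $D_l$. I do not expect any genuine obstacle here beyond the order bookkeeping; the substance lies entirely in the equivalence proved immediately above and in the already-settled classification of the linear diagrams.
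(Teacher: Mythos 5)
Your proposal is correct and follows exactly the paper's route: the paper likewise derives the statement as an immediate corollary of the equivalence between $(*,*)>\!*\,-*\cdots$ and $*=*-*\cdots$, reducing to the positive definiteness of the $B/C$ diagrams of order at least $3$ from Proposition~\ref{th:lin}. Your explicit bookkeeping of the order ($l-1\ge 3$ for the connected part) is consistent with, and slightly more detailed than, the paper's one-line proof.
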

\begin{proof}
  This follows from the $B_l/C_l$ diagrams $*=*-*-*\cdots$ of order
  $l \ge 3$ being positive definite.
\end{proof}

\begin{proposition}\label{th:max1'}
  For a connected Coxeter diagram to be positive definite it must have
  at most one double line or node.
\end{proposition}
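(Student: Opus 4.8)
The plan is to reduce every configuration carrying two heavy features to the situation already excluded by Proposition~\ref{th:max1}. The decisive tool is the orthogonal transformation constructed in the preceding paragraphs, which replaces a node carrying two single unit legs by an isolated vertex together with a double line on the branch that formerly issued from the node, turning $(*,*)>*-*\cdots$ into $*\ *=*-*\cdots$. Since this transformation is a congruence it preserves positive definiteness, and positive definiteness is inherited by principal submatrices; so I may both pass to subdiagrams and trade nodes for double lines while keeping track of definiteness.

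First I would fix the ambient structure. By Proposition~\ref{th:dir} the diagram is a tree, and by the proposition forbidding vertex degrees larger than $3$ counting multiplicity a node consists of exactly three single lines and no double line is incident to a node. Assuming for contradiction that the diagram carries two or more double lines or nodes, I would select two of them whose joining path in the tree contains no further double line or node in its interior; such a pair exists because some two features are adjacent in this sense in the subtree spanned by all of them. By Proposition~\ref{th:max1} the two chosen features are not both double lines.

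Next I would cut out a minimal subdiagram. At each chosen feature that is a node, the connecting path leaves along one of its three branches, so I truncate the other two branches to single vertices, creating two unit legs; at a chosen double line I keep the double line and its far endpoint. Because the interior of the path carries neither a node nor a double line the retained path is a chain of single lines, and because no double line touches a node this chain is nonempty in the node-plus-double-line case. The resulting subdiagram is a principal submatrix of the original and is therefore positive definite; it is either two nodes, each bearing two unit legs, joined by a chain of single lines, or one such node joined by a chain of single lines to a double line.

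Finally I would apply the node transformation to each node of this subdiagram, treating everything beyond the node as its chain $Z$. The two node blocks are disjoint, so the transformations commute and each leaves the other's block untouched; each node becomes an isolated vertex together with a double line. Up to isolated vertices the transformed matrix therefore generates $*=*-*\cdots*-*=*$ with at least one single line between the double lines, which by the proof of Proposition~\ref{th:max1} is not positive definite. As congruence and restriction to a principal submatrix preserve positive definiteness, this contradicts positive definiteness of the original diagram. I expect the only delicate point to be verifying that the two transformations genuinely compose, namely that each node still meets its chain through the single coupling entry required by the construction after the other node has been transformed, so that the end result is exactly the two-double-line chain ruled out by Proposition~\ref{th:max1}.
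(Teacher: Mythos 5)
Your proof is correct and takes essentially the same route as the paper, which disposes of the proposition in one line by citing Proposition~\ref{th:max1} together with the node-to-double-line transformation; you have merely made explicit the combinatorial details (tree structure, choice of two features with no feature between them, truncation of the spare branches to unit legs, and the composability of the two block transformations) that the paper leaves implicit. The delicate point you flag does check out: each block matrix $S$ fixes the node's own row and column, so the coupling of each node block to the rest of the matrix is untouched and the two congruences compose to yield exactly the chain $*=*-*\cdots*-*=*$ excluded by Proposition~\ref{th:max1}.
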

\begin{proof}
  This follows from proposition~\ref{th:max1} and the observation
  immediately before the preceding proposition.
\end{proof}

A similar reasoning can be applied to Coxeter diagrams
$(*-*,*)>*-*\cdots$. By proposition~\ref{th:max1'} it suffices to
consider the case when the subdiagram $*\cdots$ is an $A_l$ diagram.
With $B$, $B'$ and $T$ expressed in block form as before, now
\begin{equation}
  X = \begin{pmatrix}
    2 & -1 & 0 & 0 \\
    -1 & 2 & 0 & -1 \\
    0 & 0 & 2 & -1 \\
    0 & -1 & -1 & 2
  \end{pmatrix}
\end{equation}
and
\begin{equation}
  S = \begin{pmatrix}
    0 & \sqrt{\tfrac12} & -\sqrt{\tfrac12} & 0 \\
    1 & 0 & 0 & 0 \\
    0 & \sqrt{\tfrac12} & \sqrt{\tfrac12} & 0 \\
    0 & 0 & 0 & 1
  \end{pmatrix} .
\end{equation}
This gives
\begin{equation}
  X' = \begin{pmatrix}
    2 & -\sqrt{\tfrac12} & 0 & 0 \\
    -\sqrt{\tfrac12} & 2 & -\sqrt{\tfrac12} & 0 \\
    0 & -\sqrt{\tfrac12} & 2 & -\sqrt{2}  \\
    0 & 0 & -\sqrt{2} & 2
  \end{pmatrix} .
\end{equation}
The matrix $B'$ with this submatrix $X'$ is recognised as a
symmetrised Cartan matrix generating a generalised Coxeter diagram
$*\sim*\sim*=*-*\cdots$ where lines $\sim$ have ``multiplicities''
$\frac12$. Calculation of the minors $p_i$ of the lemma gives
$p_i = \tfrac72,6,5,4,3,2,1,0,-1,\cdots$ for $i \ge 2$ when the
vertices are ordered from the left. Here $p_i$ ceases to be positive
for $i = 9$, so:

\begin{proposition}
  The $E_l$ diagrams $(*-*,*)>*-*-*\cdots$ with $l \ge 6$ are positive
  definite if and only if $l \le 8$.
\end{proposition}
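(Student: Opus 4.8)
The plan is to read the conclusion directly off the machinery just assembled: the transformation $B \mapsto B' = T B T^T$ replaces the matrix $B$, which fails to be tridiagonal because of the node, by the matrix $B'$ generating the path diagram $*\sim*\sim*=*-*\cdots$, to which the Lemma applies.

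First I would check that the transformation preserves the property at stake. The block $S$ is orthogonal, its columns being pairwise orthogonal unit vectors, so $T$ is orthogonal and $B' = T B T^T$ is orthogonally similar to $B$; in particular $B$ is positive definite if and only if $B'$ is. The matrix $B'$ is tridiagonal with every diagonal entry equal to $2$, both features being visible from $X'$ and the path shape of the transformed diagram, so the hypotheses of the Lemma are met. An $E_l$ diagram of order $l$ is carried to a path diagram of the same order $l$, whence the leading principal minors of $B'$ are exactly $p_0,\dots,p_l$, the determinant being $p_l$.

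Next I would invoke the Lemma and split the equivalence into its two directions using the computed sequence $p_i = \tfrac72,6,5,4,3,2,1,0,-1,\dots$ for $i \ge 2$, from which $p_i > 0$ for $0 \le i \le 8$ while $p_9 = 0$. For $6 \le l \le 8$ every minor $p_0,\dots,p_l$ is positive, so the Lemma yields that $B'$, and hence $B$, is positive definite. For $l \ge 9$ the minor $p_9 = 0$ occurs in the list $p_0,\dots,p_l$ and is not positive, so the Lemma shows that $B'$, and hence $B$, is not positive definite. Combining the two cases gives the stated equivalence.

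The one step demanding care --- and the one carrying the real content --- is this passage through the orthogonal transformation: one must confirm both that $S$ is orthogonal, so that definiteness is preserved, and that $B'$ is tridiagonal with diagonal entries $2$, so that the Lemma is applicable. Both facts are immediate from the explicit form of $X'$ displayed above, after which the minor sequence settles the proposition at once.
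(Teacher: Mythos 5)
Your proposal is correct and follows the paper's own route: pass through the orthogonal congruence $B\mapsto TBT^T$ to obtain the tridiagonal matrix $B'$ generating the generalised diagram $*\sim*\sim*=*-*\cdots$, then apply the Lemma to the minor sequence $p_i=\tfrac72,6,5,4,3,2,1,0,-1,\dots$, which first fails to be positive at $i=9$. The only difference is that you spell out explicitly that $S$ is orthogonal and that $B'$ is tridiagonal with diagonal entries $2$, details the paper leaves implicit; the argument is the same.
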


Consider finally Coxeter diagrams $(*-*,*-*)>*-*\cdots$ where again
the subdiagram $*\cdots$ is an $A_l$ diagram. Now
\begin{equation}
  X = \begin{pmatrix}
    2 & 0 & -1 & 0 & 0 \\
    0 & 2 & 0 & -1 & 0 \\
    -1 & 0 & 2 & 0 & -1 \\
    0 & -1 & 0 & 2 & -1 \\
    0 & 0 & -1 & -1 & 2
  \end{pmatrix}
\end{equation}
and
\begin{equation}
  S = \begin{pmatrix}
    0 & 0 & \sqrt{\tfrac12} & -\sqrt{\tfrac12} & 0 \\
    \sqrt{\tfrac12} & -\sqrt{\tfrac12} & 0 & 0 & 0 \\
    \sqrt{\tfrac12} & \sqrt{\tfrac12} & 0 & 0 & 0 \\
    0 & 0 & \sqrt{\tfrac12} & \sqrt{\tfrac12} & 0 \\
    0 & 0 & 0 & 0 & 1
  \end{pmatrix} .
\end{equation}
This gives
\begin{equation}
 X' = \begin{pmatrix}
    2 & -1 & 0 & 0 & 0 \\
    -1 & 2 & 0 & 0 & 0 \\
    0 & 0 & 2 & -1 & 0 \\
    0 & 0 & -1 & 2 & -\sqrt{2} \\
    0 & 0 & 0 & -\sqrt{2} & 2
  \end{pmatrix} .
\end{equation}
The matrix $B'$ with this submatrix $X'$ is recognised as a
symmetrised Cartan matrix generating the disconnected Coxeter diagram
$*-*\ *-*=*-*\cdots$, which is positive definite if and only if the
diagram $*-*=*-*\cdots$ is so. But by proposition~\ref{th:lin} this is
not the case when the order of $*\cdots$ is greater than 1. Hence:

\begin{proposition}
  Coxeter diagrams where three chains of single lines, each of length
  at least 2, issue from a node are not positive definite.
\end{proposition}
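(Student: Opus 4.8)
The plan is to deduce the statement from the block computation just displayed, by first reducing the general configuration to the family $(*-*,*-*)>*-*\cdots$ that has been analysed. For the reduction I would argue as follows. A node has degree at least $3$, and by the proposition forbidding any vertex of degree larger than $3$ (counting multiplicity) a positive definite diagram cannot have a node of degree exceeding $3$; since the three issuing chains consist of single lines, the node already has degree exactly $3$, so there is no room for a fourth arm. By Proposition~\ref{th:max1'} a positive definite connected diagram carries at most one node or double line, so each of the three chains must be a plain $A$-type chain, free of further nodes and of double lines. Hence it suffices to rule out positive definiteness for a single node with three single-line $A$-chains, and since truncating each chain to length $2$ produces a subdiagram, the opening observation of the section (a subdiagram of a positive definite diagram is positive definite) lets me restrict to the smallest case, where the two parenthesised arms have length exactly $2$ and the third arm is an arbitrary $A_l$. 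Allowing the third arm to be any $A_l$ covers every admissible length of that arm, and the subdiagram principle then extends the conclusion to arbitrary lengths $\ge 2$ of the other two arms as well.

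Next I would invoke the transformation $B\mapsto B'=TBT^T$ with $T=\mathrm{diag}(S,I)$ and the explicit matrices $X$, $S$, $X'$ exhibited above. The matrix $S$ has orthonormal columns, so $T$ is orthogonal; consequently $B'=TBT^{-1}$ is congruent (indeed similar) to $B$ and shares its inertia, whence $B$ is positive definite if and only if $B'$ is. The junction vertex connecting the $X$-block to $Z$ is fixed by $S$ (its last row is $(0,\dots,0,1)$), so $Y$ is carried to itself and $B'$ acquires the displayed block form generating the disconnected diagram $*-*\ *-*=*-*\cdots$.

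To finish I would decompose $B'$ along its two components, using that a block-diagonal real symmetric matrix is positive definite exactly when every block is. The first component is an $A_2$ diagram $*-*$ and is positive definite. In the second component $*-*=*-*\cdots$ the single-line chain issuing from the right endpoint of the double line is precisely the arm $Z=A_l$, of length $l$; when $l\ge 2$ this strictly extends $F_4=*-*=*-*$, which by Proposition~\ref{th:lin} is the \emph{unique} positive definite diagram with single-line chains at both ends of a double line, so the component is not positive definite. Therefore $B'$, and with it $B$, fails to be positive definite. (The borderline $l=1$ gives back $F_4$, hence $E_6$, explaining why ``length at least $2$'' is the sharp threshold.)

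The main obstacle is the reduction of the first paragraph rather than any arithmetic: one must be sure that ``three single-line chains of length $\ge 2$ from a node'' genuinely contains an instance of the analysed family as a subdiagram, and that no competing positive definite structure—an extra arm at the node, or a double line hidden inside an arm—can intervene; this is exactly what the degree bound and Proposition~\ref{th:max1'} rule out. The transformation step is then mere verification, namely that the chosen $S$ is orthogonal and acts trivially on the junction row, so that $B'$ splits into the claimed $A_2$ and $*-*=*-*\cdots$ components.
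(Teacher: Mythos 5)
Your proposal is correct and follows essentially the same route as the paper: the same orthogonal block transformation $B\mapsto TBT^{T}$ converting the node into a double line, the same recognition of the resulting disconnected diagram $*-*\ \ *-*=*-*\cdots$, and the same appeal to Proposition~\ref{th:lin} to kill the component $*-*=*-*\cdots$ when the third arm has order at least $2$. Your added preliminaries (the explicit reduction via the subdiagram principle and Proposition~\ref{th:max1'}, and the verification that $S$ is orthogonal and fixes the junction row so that $Y$ is unchanged) merely spell out steps the paper leaves implicit.
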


\section{\label{con}Conclusion}

The propositions in section~\ref{sec:an} can be summarised in the
following:

\begin{theorem}
  The positive definite connected Coxeter diagrams are: $A_l$ for
  $l \ge 1$, $B_l/C_l$ for $l \ge 2$, $D_l$ for $l \ge 4$, $E_l$ for
  $6 \le l \le 8$, $F_4$ and $G_2$.
\end{theorem}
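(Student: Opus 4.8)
The plan is to read the theorem as the synthesis of the propositions of section~\ref{sec:an}, organised by a single structural case analysis and relying throughout on two facts: that every subdiagram of a positive definite Coxeter diagram is positive definite, and that by proposition~\ref{th:dir} a connected positive definite diagram is a tree. I would first dispose of the triple line: as observed immediately after the proposition that bounds vertex degrees, $G_2$ is the only positive definite diagram carrying one, and it is positive definite by proposition~\ref{th:lin}. Invoking proposition~\ref{th:max1'}, every remaining connected positive definite diagram has at most one double line or node, which partitions the remaining diagrams into three families to be treated separately.

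If the diagram has neither a node nor a multiple line it is a chain of single lines, hence some $A_l$, positive definite for all $l\ge 1$ by proposition~\ref{th:lin}. If it has exactly one double line and no node it is linear with that double line placed somewhere along it: at an endpoint it is $B_l/C_l$ with $l\ge 2$, and in the interior proposition~\ref{th:lin} leaves $F_4$ as the only positive definite possibility. The substantive family is a single node with three single-line arms, of lengths $a\le b\le c$ counted in vertices other than the node. Here $a=b=1$ gives $D_l$ with $l\ge 4$; $a=1,b=2$ (whence $c\ge 2$) gives $E_l$, positive definite exactly for $6\le l\le 8$ by the proposition on the $E_l$ diagrams; and $a\ge 2$ forces, by truncation of each arm to length $2$, the subdiagram with arms $(2,2,2)$, which is not positive definite by the final proposition. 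So these subcases are all accounted for.

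The step I expect to be the main obstacle is the one range left untouched above, namely $a=1$ with $b\ge 3$, whose minimal representative is the node diagram with arms $(1,3,3)$. This diagram resists the subdiagram principle: each of its proper connected subdiagrams is an $A$, $D$, or $E$ diagram and hence positive definite, it does not contain the arms-$(2,2,2)$ diagram since an arm of length $1$ cannot be lengthened by truncation, and its longest arm is too short for it to contain the order-$9$ member of the $E_l$ family. It must therefore be shown non-positive-definite on its own, and—since a tree possessing a node admits no tridiagonal vertex ordering—none of the three orthogonal folds constructed in section~\ref{sec:an} applies to it verbatim. My plan is to supply the missing fold by the same recipe, rotating the two length-$3$ arms against each other exactly as the final proposition rotates two length-$2$ arms, so as to replace the symmetrised Cartan matrix by one generating a disjoint union of linear diagrams; applying the Lemma to the components I anticipate a vanishing minor—equivalently, that the determinant of the order-$8$ matrix is $0$—certifying that the arms-$(1,3,3)$ diagram, and with it every diagram having $a=1,b\ge 3$, fails to be positive definite. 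Once this case is closed the six families of the theorem are seen to be exactly the positive definite connected Coxeter diagrams.
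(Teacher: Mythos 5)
Your proposal is correct, and its overall architecture is the paper's own: the theorem is obtained as a case analysis over trees (proposition~\ref{th:dir}) with at most one double line or node (proposition~\ref{th:max1'}), each case being settled by one of the propositions of section~\ref{sec:an}. Where you genuinely go beyond the paper is in the case you single out as the main obstacle, and you are right to do so: the node diagrams with three single-line arms of lengths $(1,b,c)$, $3\le b\le c$, are covered by none of the stated propositions. The paper treats arms $(1,1,c)$ (the $D_l$ series), arms $(1,2,c)$ (the $E_l$ series) and arms all of length at least $2$ (the final proposition), and, as you observe, the minimal missing diagram --- the order-$8$ tree with arms $(1,3,3)$, i.e.\ the extended $E_7$ diagram --- has every proper connected subdiagram of type $A$, $D$ or $E$, and contains neither the arms-$(2,2,2)$ diagram nor the order-$9$ member of the $E_l$ family; it therefore cannot be excluded by the subdiagram principle, and the paper's one-sentence proof of the theorem (that the propositions ``can be summarised'' in it) is incomplete without your additional step. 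Your proposed repair also works exactly as you anticipate: the orthogonal fold sending corresponding basis vectors on the two length-$3$ arms to their normalised sums and differences, while fixing the node and the short arm, splits the symmetrised Cartan matrix into an $A_3$ block on the differences and, on the short arm, the node and the sums, the order-$5$ linear diagram $*-*=*-*-*$, whose minor sequence $3,2,1,0$ in~\eqref{eq:lin} already appears in the proof of proposition~\ref{th:lin}. The determinant of the order-$8$ matrix is therefore $4\cdot 0=0$, so the arms-$(1,3,3)$ diagram is not positive definite, and every diagram with arms $(1,b,c)$, $3\le b\le c$, contains it as a subdiagram. With that case closed your enumeration is exhaustive and the theorem follows.
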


\noindent By the observation after proposition~\ref{th:dir} arbitrary
orientation of each multiple line in a diagram in the list gives a
Dynkin diagram of a reduced root system. Due to the symmetry of the
Coxeter diagram in the remaining cases distinct Dynkin diagrams only
arise in this way in the case of the $B_l/C_l$ diagrams with
$l \ge 3$. Each of these Coxeter diagrams corresponds to two Dynkin
diagrams, $B_l$ with Cartan matrix entry $A_{12} = -2$ and $C_l$ with
Cartan matrix entry $A_{21} = -2$ when the Cartan matrices are chosen
such that their first and second rows correspond to the first and
second vertices from the left in the Coxeter diagram $*=*-*\cdots$. By
the observations in sections~\ref{sec:in}--\ref{sec:sym} one arrives
at the familiar result~\cite{ref:Kil88,ref:Car94,ref:Dyn47}:

\begin{theorem}[Killing, Cartan, Dynkin]
  The Dynkin diagrams of simple finite-dimensional Lie algebras over
  $\mathbb C$ are: $A_l$ for $l \ge 1$, $B_l$ for $l \ge 2$, $C_l$ for
  $l \ge 3$, $D_l$ for $l \ge 4$, $E_l$ for $6 \le l \le 8$, $F_4$ and
  $G_2$.
\end{theorem}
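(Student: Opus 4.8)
The plan is to assemble the final theorem purely by translating the classification of positive definite connected Coxeter diagrams (the preceding theorem) into the language of Dynkin diagrams, using the correspondence established in the observation following Proposition~\ref{th:dir}. First I would invoke the chain of reasoning assembled in sections~\ref{sec:in}--\ref{sec:sym}: a Cartan matrix arises from~\eqref{eq:A} in terms of a positive definite quadratic form if and only if its symmetrised Cartan matrix is positive definite, and such a Cartan matrix then generates a reduced root system, which by Serre's correspondence~\cite{ref:Ser66} determines a semisimple finite-dimensional Lie algebra over $\mathbb C$ up to isomorphism. Connectedness of the diagram corresponds to irreducibility of the root system, hence to simplicity of the Lie algebra. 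Thus the simple Lie algebras correspond exactly to the connected Dynkin diagrams whose associated Coxeter diagram is positive definite.

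Next I would apply the observation after Proposition~\ref{th:dir}, which says that every orientation of the multiple lines of a positive definite connected Coxeter diagram yields a Dynkin diagram generated by a symmetrisable Cartan matrix whose symmetrised version is positive definite. Running through the list in the preceding theorem, I would note that the diagrams $A_l$, $D_l$, $E_l$ and $G_2$ either have no multiple line or possess a graph automorphism exchanging the two orientations of their single multiple line, so each gives rise to exactly one Dynkin diagram bearing the same name. The diagram $F_4$ likewise yields a single Dynkin diagram, since its Coxeter diagram $*-*=*-*$ admits a reflection symmetry that makes the two orientations of its double line isomorphic. The genuinely different case is $B_l/C_l$ for $l \ge 3$: here the Coxeter diagram $*=*-*\cdots$ has a terminal double line but no symmetry interchanging its endpoints, so the two orientations give distinct Dynkin diagrams, namely $B_l$ (with $A_{12} = -2$) and $C_l$ (with $A_{21} = -2$). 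For $l = 2$ the diagram is symmetric, so $B_2$ and $C_2$ coincide.

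Finally I would tabulate the outcome to produce the stated list. The single-orientation cases contribute $A_l$ for $l \ge 1$, $D_l$ for $l \ge 4$, $E_l$ for $6 \le l \le 8$, $F_4$ and $G_2$, together with $B_2 = C_2$ from the symmetric case $l = 2$. The split case $l \ge 3$ contributes $B_l$ and $C_l$ separately, so that the combined ranges become $B_l$ for $l \ge 2$ and $C_l$ for $l \ge 3$, exactly as asserted. Conversely, since every simple Lie algebra's Cartan matrix is symmetrisable with positive definite symmetrised form, its Coxeter diagram appears in the preceding theorem, so no diagram outside this list occurs; this establishes the exhaustiveness of the list.

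I do not anticipate a substantive obstacle, since all the hard analytic work---determining which Coxeter diagrams are positive definite---has already been carried out in section~\ref{sec:an}. The only point demanding care is the bookkeeping of orientations: one must verify for each surviving Coxeter diagram whether a graph automorphism identifies the two orientations of a multiple line, so as to count Dynkin diagrams correctly and avoid either conflating $B_l$ with $C_l$ for $l \ge 3$ or spuriously splitting the symmetric diagrams. This is purely combinatorial and poses no real difficulty.
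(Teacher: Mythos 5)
Your proposal is correct and follows essentially the same route as the paper: it combines the classification of positive definite connected Coxeter diagrams with the orientation construction from the observation after Proposition~\ref{th:dir} and the equivalences of sections~\ref{sec:in}--\ref{sec:sym}, with the symmetry bookkeeping isolating $B_l/C_l$ for $l \ge 3$ as the only case yielding two distinct Dynkin diagrams. The paper's own deduction is just a terser version of this same argument.
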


\end{document}